\documentclass[a4paper,UKenglish,cleveref,autoref,pdfa]{lipics-v2021}
\usepackage{graphicx} 
\usepackage[full,disableredefinitions,small]{complexity}
\usepackage[size=footnotesize, color=blue!20]{todonotes}
\usepackage{xspace}
\usepackage{tcolorbox}
\usepackage{cite}
\usepackage{optidef} 
\usepackage{xpatch}

\usepackage{booktabs}
\usepackage{multirow}

\usepackage[size=footnotesize, color=blue!20]{todonotes}

\crefname{claim}{claim}{claims}
\Crefname{claim}{Claim}{Claims}

\definecolor{light blue}{rgb}{0.23,0.62,0.81}
\let\emph\relax\DeclareTextFontCommand{\emph}{\color{light blue}\em}

\DeclareMathOperator{\costone}{cost_1}
\DeclareMathOperator{\costinf}{cost_\infty}
\DeclareMathOperator{\OPTspread}{OPT_{\mathrm{spread}}}
\DeclareMathOperator{\OPTconn}{OPT_{\mathrm{conn}}}

\newcommand{\connectivity}{\textsc{Connectivity-Maintenance}\xspace}
\newcommand{\spreading}{\textsc{Points-Spreading}\xspace}

\title{Linear-Time Transformations Between Connectivity Maintenance and Points Spreading on Linear and Cyclic Domains}
\titlerunning{Transformations Between Connectivity Maintenance and Points Spreading}

\author{Nicolás Honorato-Droguett}
  {Nagoya University, Japan \and \url{https://nhdroguett.com}}
  {honorato.droguett.nicolas.n7@s.mail.nagoya-u.ac.jp}
  {https://orcid.org/0009-0005-1969-3649}
  {}
\authorrunning{N.~Honorato-Droguett}
\Copyright{Nicolás Honorato-Droguett}

\ccsdesc[500]{Theory of computation~Computational geometry}
\ccsdesc[300]{Theory of computation~Design and analysis of algorithms}

\acknowledgements{I would like to thank Alexander Wolff for the fruitful discussions where the ideas of this paper first came to me.}

\nolinenumbers
\hideLIPIcs

\keywords{points spreading, dispersion problem, connectivity, min-sum, min-max}

\begin{document}

\maketitle

\begin{abstract}
    Given $n$ points on a line or closed cycle and a threshold $r>0$, the \emph{connectivity-maintenance} problem is to move the points so that every gap between consecutive points is at most $r$, whereas the \emph{points-spreading} problem requires every gap to be at least $r$.
    Li and Wang [CCCG 2015; CGT 2025] and Chen, Gu, Li, and Wang [SWAT 2012; DCG 2013] gave $O(n)$-time algorithms for the cyclic versions of min-max points-spreading and min-max connectivity-maintenance, respectively. Ghadiri and Yazdanbod [CCCG 2016] gave an $O(n\log n)$-time algorithm for the linear version of min-sum points-spreading.
    In this paper, we show that the two problems can be reduced in linear time to each other for both objectives and on both linear and cyclic domains.
    As an implication, min-sum connectivity-maintenance is solvable in $O(n\log n)$ time on both domains.
    Finally, we extend the reduction to points on a line with individual thresholds when their initial order is preserved.
\end{abstract}

\section{Introduction}

In the context of wireless sensor networks, the positions of sensors are crucial to maintaining coverage and connectivity.
When their initial positions do not satisfy the required constraints, moving the sensors is one way to improve coverage or connectivity~\cite{Bredin2005,Zavlanos2007,Wang2006}.
Even in one-dimensional domains, vehicles or sensors positioned along a road are moved to maintain network connectivity~\cite{Li2019A}, while sensors on a line or closed boundary are moved to establish barrier coverage~\cite{Czyzowicz2010,Andrews2016,Li2026}.
By contrast, sensors may be reallocated to reduce interference caused by overlapping sensing ranges~\cite{Kranakis2016,Kapelko2022}.

These tasks lead to two movement problems with opposite distance constraints.
Given $n$ points on a line or closed cycle and a positive threshold $r$, in {\connectivity} we want to move the points so that every gap between consecutive points in the final order is at most $r$, whereas in {\spreading} we wish for every such gap to be at least $r$.
On a closed cycle, distances and movements are measured along the closed cycle, and unlike the linear version, the constraint of {\connectivity} applies to every cyclic gap, including the gap between the last and first points~\cite{Chen2013,Li2026}.
For either problem, the objective is to minimise either the maximum movement of a point or the total movement, giving the min-max and min-sum objectives, respectively.

Although the two problems have opposite distance constraints, they otherwise share the same basic structure.
In both problems we are given the same set of points and threshold, allow the same movements, and use the same objective functions.
Interestingly enough, this similarity also appears in the known algorithms for the cyclic versions of min-max {\connectivity} and {\spreading}.
For {\connectivity}, the authors of~\cite{Chen2013} find a consecutive sequence of points whose gaps are too large.
Their algorithm moves the first and last points of this sequence towards each other, places the points between them at distance $r$, and then greedily moves the remaining points.
For {\spreading}, the greedy algorithm due to Li and Wang~\cite{Wang2025} first moves the points in one direction and then shifts all points back by half of the largest movement.
The recent approach by Li~\cite{Li2026} for min-max {\connectivity} follows the same approach, but instead reduces gaps larger than $r$.
In all three approaches, the optimal value is determined by a consecutive sequence of gaps whose total length is too large for {\connectivity} or too small for {\spreading}.
Despite this similarity, the two problems considered here have been formulated and studied separately.
For the min-sum objective, algorithms are known for {\spreading}~\cite{Ghadiri2016,HonoratoDroguett2026}.
To the best of our knowledge, no similar approach has been given for {\connectivity}.

Seeing these similarities, it is natural to ask whether these two problems are, in fact, closely related.
Our results show that they are.
We give transformations between the two problems in both directions, for both objectives and on both domains, such that the corresponding optimal costs differ only by a scaling factor.
Together with the known algorithms for min-sum {\spreading}, this implies that min-sum {\connectivity} is solvable in $O(n\log n)$ time on a line and a closed cycle.

\subparagraph{Related Work}
Both problems belong to the general study of minimising the movement of geometric objects subject to distance constraints~\cite{Demaine2009}.
For min-max {\spreading}, Dumitrescu and Jiang gave the first polynomial-time algorithms on a line and a closed cycle using linear programming~\cite{Dumitrescu2011a}.
Li and Wang improved these results to optimal $O(n)$-time algorithms on both a line and a closed cycle~\cite{Wang2025}.
Ghadiri and Yazdanbod gave an $O(n\log n)$-time algorithm for min-sum {\spreading} on a line~\cite{Ghadiri2016}.
The authors of~\cite{HonoratoDroguett2026} later gave an $O(n\log n)$-time algorithm on a closed cycle, completing the four cases of {\spreading} considered here.
A related min-max problem asks to move intervals on a line until they are pairwise disjoint.
Li and Wang gave an algorithm with optimal runtime of $O(n\log n)$ for this problem~\cite{Li2019}.
This problem is equivalent to min-max {\spreading} with individual thresholds when each point represents the centre of an interval.

For min-max {\connectivity} on a line, Li, Yan, and Zhang gave an algorithm that runs in linear time~\cite{Li2019A}.
The closed-cycle version can also be viewed as covering the closed cycle with sensors of equal sensing range.
The threshold is then twice the sensing range.
An $O(n)$-time algorithm for this setting was given in~\cite{Chen2013}.
Li later gave a linear-time algorithm that also obtains a lexicographically optimal solution~\cite{Li2026}.
To the best of our knowledge, no $O(n\log n)$-time algorithms had been given for the linear and cyclic versions of min-sum {\connectivity}.

Other work considers the related problem of moving sensors so that their sensing ranges cover a fixed line segment.
For the min-sum objective and equal sensing ranges, the authors of~\cite{Czyzowicz2010} gave an $O(n^2)$-time algorithm, later improved by Andrews and Wang to optimal $O(n\log n)$ time~\cite{Andrews2016}.
For the min-max objective, an $O(n^2)$-time algorithm was given in~\cite{Czyzowicz2009}.
This bound was improved to $O(n\log n)$ and to $O(n)$ when all sensors are initially on the barrier~\cite{Chen2013}.
For arbitrary sensing ranges, the min-sum problem is hard to approximate within any constant factor~\cite{Czyzowicz2010}.
The authors of~\cite{Gaspers2017} strengthened this hardness and gave approximation and parameterised results.
Mehrandish, Narayanan, and Opatrny instead considered minimising the number of sensors moved, giving polynomial-time algorithms for equal ranges and proving hardness for unequal ranges~\cite{Mehrandish2011}.

\subparagraph{Our Results}
We give transformations between {\connectivity} and {\spreading} in both directions.
These transformation apply to both objectives and the obtained optimal costs differ only by a scaling factor.
We first prove the result on a closed cycle (\Cref{sec:cyclic_case}) and then on a line (\Cref{sec:linear_case}).
Together with the known min-sum algorithms for {\spreading}, they imply that min-sum {\connectivity} is solvable in $O(n\log n)$ time on both domains.
We also extend the linear result to individual thresholds when the final positions are required to preserve their initial order (\Cref{sec:individual_thresholds}).
Finally, we give open problems in \Cref{sec:open_problems}.
\Cref{tab:summary} summarises the known runtimes together with our results.

\begin{table}[t]
    \centering
    \caption{Overview of runtimes for {\connectivity} and
            {\spreading}. Results marked with $\star$ are the runtimes implied by the results of this paper.}
    \label{tab:summary}
    \begin{tabular}{lccc}
        \toprule
        \bfseries Domain & \bfseries Objective
                         & \bfseries \connectivity                  & \bfseries \spreading \\
        \midrule
        \multirow{2}{*}{Line}
                         & min-max
                         & $O(n)$~\cite{Li2019A}
                         & $O(n)$~\cite{Wang2025}                                          \\
                         & min-sum
                         & $O(n\log n)$~$\star$
                         & $O(n\log n)$~\cite{Ghadiri2016}                                 \\
        \midrule
        \multirow{2}{*}{Closed cycle}
                         & min-max
                         & $O(n)$~\cite{Chen2013,Li2026}
                         & $O(n)$~\cite{Wang2025}                                          \\
                         & min-sum
                         & $O(n\log n)$~$\star$
                         & $O(n\log n)$~\cite{HonoratoDroguett2026}                        \\
        \bottomrule
    \end{tabular}
\end{table}

\section{Preliminaries}
Throughout the paper, we use $[n]$
as shorthand for the set $\{1,\ldots,n\}$ and we assume that $n\ge 2$.
For two points $p,q\in \mathbb{R}$, we define the \emph{distance} of $p$ and $q$ by $d(p,q) = |p-q|$.
For an $n$-tuple $X = (x_i)_{i\in[n]}$ of positions on the line with indices following order from left to right,
we also define $g_i(X) =  x_{i+1}-x_i$ for $i \in [n-1]$ and call these values the \emph{gaps} of $X$.
For the cyclic domain, let $C$ be a closed cycle of length $|C|$. We represent positions on $C$ by coordinates modulo $|C|$. For two points $p,q\in C$, the distance between $p$ and $q$ is $d(p,q) = \min\{c(p,q),c(q,p)\}$, where $c(p,q)$ is the \emph{length of the clockwise arc} from $p$ to $q$.
For an $n$-tuple $X = (x_i)_{i\in [n]}$ indexed in clockwise order with $x_1\le\cdots\le x_n$, the gaps of $X$ are $g_i(X) =x_{i+1}-x_i$ for $i\in [n]$, where $x_{n+1} = x_1 + |C|$ represents the same point as $x_1$ modulo $|C|$.
Note that the gaps satisfy $\sum_{i \in [n]} g_i(X) = |C|$.
Given a point $x\in C$, we \emph{unwrap} $C$ at $x$ by cutting it at $x$ and mapping each point $p\in C$ to the coordinate $c(x,p)\in[0,|C|)$.
We \emph{wrap} these coordinates back into $C$ by taking them modulo $|C|$.
Throughout the paper, we assume that the given tuples follow the corresponding index order, unless stated otherwise.

\subsection{Problem Formulation}

We give the general formulation that includes both {\spreading} and {\connectivity} problems.
Let $P = (p_i)_{i\in [n]}$ and $Q = (q_i)_{i\in [n]}$ denote the initial and final positions, respectively, of $n$ labelled points on a line or closed cycle.
For each $i\in[n]$, point $i$ moves from $p_i$ to $q_i$.
We consider the movement cost functions:
\begin{equation*}
    \costone(P,Q)
    =\sum_{i \in [n]}d(p_i,q_i),
    \quad
    \costinf(P,Q)
    =\max_{i\in[n]}d(p_i,q_i).
\end{equation*}

Given a \emph{threshold} $r >0$, we denote an instance on the line or closed cycle by $(P,r)$.
The {\connectivity} problem asks to find $Q$ such that every gap $g_i(Q)$ is {\em at most} $r$.
The {\spreading} problem, instead, asks to find $Q$ such that every gap $g_i(Q)$ is {\em at least} $r$.
In either problem, the goal is to find $Q$ while minimising $\costone(P,Q)$ (min-sum) or $\costinf(P,Q)$ (min-max).
When the domain, the threshold $r$, and the objective are clear from context, we denote the minimum cost among all feasible $Q$ by $\OPTconn(P)$ for {\connectivity} and by $\OPTspread(P)$ for {\spreading}.
For the cyclic version, we also assume that $|C|\le nr$ for {\connectivity} and $|C|\ge nr$ for {\spreading}.
There are no feasible solutions for the corresponding instance otherwise.
Lastly, we assume for the cyclic version that $p_1 = 0$ without loss of generality since we can choose $p_1$ as the origin and all gaps remain unchanged.
\subparagraph{Order Preservation} A solution $Q=(q_i)_{i\in[n]}$ preserves the order of $P$ if $q_1\le\cdots\le q_n$ on the line and if $q_1,\ldots,q_n$ occur in the same cyclic order as $p_1,\ldots,p_n$ on a closed cycle.
For both problems defined above, we assume that there is an optimal solution preserving the order of the points on both a line and a closed cycle.
This was shown by Dumitrescu and Jiang~\cite{Dumitrescu2011a}, and an analogous property was shown by Ghadiri and Yazdanbod~\cite{Ghadiri2016} for the linear version of min-sum {\spreading}.
In general, the same uncrossing argument applies to all cases considered here, since one can always swap two crossing points while preserving feasibility and not increasing the movement cost.
On a closed cycle, we first unwrap the portion containing the crossing and then wrap it back.

\section{Transformation for the Cyclic Version}
\label{sec:cyclic_case}
We show how to transform an instance of {\connectivity} to an instance of {\spreading}, and vice versa.
That is, given an instance $(P,r)$ with optimal solution $Q$ for {\connectivity}, we show that there is a corresponding instance $(P',r)$ with optimal solution $Q'$.
\Cref{fig:reduction_example} illustrates the transformation and the corresponding solutions for an instance with four points.
We show how to derive these values.
For simplicity, we may assume that $r=1$.
If $r \neq 1$, we can divide all coordinates and the length of the closed cycle by $r$ to produce a scaled instance.
Multiplying an optimal solution of the resulting instance by $r$ recovers an optimal solution to the original instance.

The rest of the section is devoted to proving the following theorem.
\begin{theorem}\label{thm:conn-edg-cyclic}
    For $r>0$, one can transform an instance $(P,r)$ of {\connectivity} on a closed cycle $C$ into an instance $(P',r)$ of {\spreading} such that $\OPTconn(P)=(|C|/r)\OPTspread(P')$, and vice versa.
\end{theorem}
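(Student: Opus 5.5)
The plan is to use an affine ``flip'' of the gaps that turns upper bounds into lower bounds while leaving individual displacements unchanged up to a uniform scaling. By the scaling remark above we may take $r=1$, so $|C|\le n$ for {\connectivity}. By the order-preservation property stated in the preliminaries, we restrict to solutions $Q$ that keep the cyclic order of $P$. We describe such a $Q$ by real lifts $q_1\le\dots\le q_n\le q_1+|C|$ with gaps $g_i(Q)\in(0,1]$. Fix the constant $c=|C|+1$ and define the map
\[
  \Phi(x_1,\dots,x_n)=\Bigl(\tfrac{c\,i-x_i}{|C|}\Bigr)_{i\in[n]},
\]
read on a new cycle $C'$ of length $L'=(nc-|C|)/|C|$. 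Set $P'=\Phi(P)$.

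The first step is a direct computation showing that $\Phi$ carries order-preserving feasible solutions of one problem to those of the other.
\begin{itemize}
\item For $i<n$ we have $g_i(\Phi(Q))=(c-g_i(Q))/|C|$. This is positive because $g_i(Q)\le 1<c$, so the order is preserved.
\item The wrap-around gap $y_{n+1}-y_n$, with $y_{n+1}=y_1+L'$, also equals $(c-g_n(Q))/|C|$, so $\Phi$ is consistent with the cycle $C'$.
\item Consequently $g_i(Q)\le 1$ holds if and only if $g_i(\Phi(Q))\ge (c-1)/|C|=1$. Hence $Q$ is feasible for {\connectivity} on $C$ exactly when $\Phi(Q)$ is feasible for {\spreading} on $C'$.
\item The necessary condition $L'\ge n$ is equivalent to $n\ge |C|$, which is exactly the feasibility assumption on $(P,1)$.
\end{itemize}
The map is affine and invertible, via $x_i=c\,i-|C|y_i$. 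Running the same computation backwards gives the reverse direction: starting from a {\spreading} instance on a cycle of length $L'\ge n$, we solve $L'=(n(|C|+1)-|C|)/|C|$ for $|C|=n/(L'-n+1)$, which satisfies $|C|\le n$, and apply $\Phi^{-1}$. This yields the ``vice versa'' part with the same scaling relation.

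The second step is to compare costs. The signed displacement satisfies $(\Phi(Q))_i-(\Phi(P))_i=-(q_i-p_i)/|C|$. So if every movement $d(p_i,q_i)$ is realised by the displacement $q_i-p_i$, then every movement in the new instance is exactly $d(p_i,q_i)/|C|$. This holds simultaneously for all $i$, so both $\costone$ and $\costinf$ scale by exactly $1/|C|$. Taking optima over the bijection then gives $\OPTconn(P)=|C|\,\OPTspread(P')$, and undoing the scaling by $r$ gives the factor $|C|/r$.

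The main obstacle is this identification of cyclic distance with the lifted displacement. The distance on a cycle is the shorter arc, and the two cycles have different lengths: $L'$ is not $|C|/|C|=1$. So we must choose lifts carefully. My plan is the following.
\begin{itemize}
\item Show that some optimal order-preserving solution admits a lift with $|q_i-p_i|\le |C|/2$ for all $i$ simultaneously and with $q_1$ lifted consistently with $p_1=0$. For this I would use the uncrossing and rotation argument: rotating the entire configuration by $\pm|C|$ relative to the labels never decreases the cost, and any point moving more than half the cycle can be rerouted along the other arc.
\item Check that under $\Phi$ this lift yields displacements of size at most $1/2\le L'/2$, so the new cycle distance equals the scaled displacement.
\item Conversely, for a lift that is not minimal, note that $d\le|\text{displacement}|$ on either cycle. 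So each optimum maps to a solution of no greater scaled cost, which gives the inequality in both directions.
\end{itemize}
Combining these gives equality of the scaled optima. The construction takes $O(n)$ time.
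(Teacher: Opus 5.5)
\textbf{Where the proposal breaks: the reverse direction and the inequality $\OPTconn(P)\le|C|\,\OPTspread(P')$.}

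Your forward map matches the paper's construction up to a rotation of $C'$: the paper uses $(i-1)(|C|+1)$ where you use $i(|C|+1)$. The cost scaling is the same. The simultaneous-lift lemma you plan is the paper's claim that some optimal solution has a point of $C$ crossed by no movement arc. The paper proves it with exactly the two moves you only gesture at: a swap showing no point is crossed by both a clockwise and a counterclockwise arc, and the relabelling $(q_n,q_1,\dots,q_{n-1})$.

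The genuine gap is that $\Phi$ is invertible as an affine map of $\mathbb{R}^n$, but it is not a bijection between order-preserving configurations. Every configuration in its image has all gaps in $[1/|C|,(|C|+1)/|C|]$. Consequently your ``vice versa'' via $\Phi^{-1}$ is false. Take $n=2$, $L'=10$ and $P'=(0,0)$.
\begin{itemize}
\item Your formula gives $|C|=2/9$, and the lift $\Phi^{-1}(P')=(11/9,22/9)$ has wrap-around gap $-1$.
\item Since $|C|<1$, every configuration on $C$ is feasible for \connectivity, so $\OPTconn(\Phi^{-1}(P'))=0$.
\item Meanwhile $\OPTspread(P')>0$, so the claimed relation fails.
\end{itemize}
The paper instead applies the same forward formula to the \spreading instance, scaling by that instance's own cycle length $|C|\ge n$. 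The new gaps $(|C|+1-g_i(P))/|C|$ are then positive because $g_i(P)\le|C|$. The new cycle has length $n-1+n/|C|\le n$, and the factor in the conclusion is the length of the \spreading cycle.

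The same non-surjectivity also breaks ``taking optima over the bijection'' in the forward direction, specifically the inequality $\OPTconn(P)\le|C|\,\OPTspread(P')$. An optimal \spreading solution $Q''$ of $P'$, even with a consistent lift, may have a gap larger than $(|C|+1)/|C|$. Then $\Phi^{-1}(Q'')$ has a negative consecutive difference, so it is not an order-preserving lift, and your gap correspondence says nothing about its feasibility.

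The paper handles this directly:
\begin{itemize}
\item The differences $|C|+1-|C|\,g_i(Q'')$ are all at most $1$ and sum to $|C|$.
\item Sweeping one period therefore shows that the resulting, possibly reordered, point set has no empty arc longer than $1$.
\item Each point moves at most $|C|\,d(p'_i,q''_i)$.
\end{itemize}
Alternatively, you could prove a cyclic analogue of the bounded-gap lemma; the same three cases go through once you have a consistent lift. That lets you choose $Q''$ with all gaps at most $\max\{1,\max_i g_i(P')\}\le(|C|+1)/|C|$, after which $\Phi^{-1}(Q'')$ is a genuine order-preserving solution. Either way, this step and a correct reverse construction are missing from your proposal.
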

\begin{proof}
    By the scaling described above, it suffices to prove the statement for $r=1$.
    Our transformation is based on the following observation: by the order preservation property, the constraints of both problems depend exclusively on the gaps between consecutive points.
    Thus if we want to transform an instance $(P,1)$ of one problem, say {\connectivity}, we must only establish a correspondence between the gaps of $P$ and the constructed instance $(P',1)$ of {\spreading}.
    If such a correspondence is possible, then a movement that reduces a gap in one instance must correspond to a movement that increases the corresponding gap in the other instance, and vice versa.
    The factor relating these distances consequently scales both the total and the maximum movement.

    We show that the above factor exists.
    In particular, we want an instance $(P',1)$ such that the $i$th gap $g_i(P)$ satisfies $g_i(P) \le 1$ if and only if the $i$th gap $g_i(P')$ satisfies $g_i(P') \ge 1$ for all $i\in [n]$, or equivalently, $g_i(P) - 1 \le 0$ if and only if $g_i(P') - 1 \ge 0$.
    For the constructed instance, we first consider setting $g_i(P) -1 = -g_i(P')+1$ since this assignment satisfies the inequality.
    However, we may have that $g_i(P) >2$ which yields a negative value for $g_i(P')$.
    To solve this, we scale the value of $g_i(P)-1$ by $1/|C|$, yielding a value bounded by $1$ since no gap can be larger than $|C|$.
    We set then $(g_i(P) - 1)/|C| = -(g_i(P')-1)$ which gives $g_i(P') = (|C|+1-g_i(P))/|C|$.

    \begin{figure}[bt]
        \centering
        \includegraphics[scale=1,page=2]{conn_spread_figures.pdf}
        \caption{Transformation example. Here, $P = (p_1,\ldots,p_4) = (0,2,8,10)$, $r = 6$ and $|C| = 24$. An optimal solution for both objectives is $Q = (q_1,\ldots,q_4)=(20,2,8,14)$ (only $q_1$ and $q_4$ are shown). The {\spreading} instance is $(P',r)$, where $P'=(0,7,13,20)$ and $|C'| =24$, and an optimal solution is $Q' = (1,7,13,19)$.
            The $4$th gap violates the respective feasibility condition in both instances. In particular, $g_4(P) = 14 > 6$ and $g_4(P') = 4 < 6$.
            Their movement vectors are $(-4,0,0,4)$ and $(1,0,0,-1)$, and the second is obtained by multiplying the first by $-r/|C|$.
        }
        \label{fig:reduction_example}
    \end{figure}

    We derive the remaining positions using the obtained gap formula for $P'$.
    For $i\in [n-1]$, we have $g_i(P')  = p'_{i+1} - p'_i = (|C|+1-p_{i+1}+p_i)/|C|$ and obtain the recurrence $p_{i+1} + |C| p'_{i+1} = p_i + |C|p'_i + |C| + 1$.
    Expanding this recurrence yields $p_{i} + |C| p'_{i} = p_1 + |C| p'_1 + (i-1)(|C|+1)$ for every $i\in [n]$.
    Now, we choose $p'_1 =0$ and obtain $p'_{i} = ((i-1)(|C|+1)-p_i)/|C|$ for $i \in [n]$ since $p_1 = 0$.
    We can choose $p'_1 = 0$ since $P'$ is also determined only by the gaps between points.

    For $C'$, we have $|C'| = \sum_{i \in [n]} g_i(P') = \sum_{i \in [n]} (|C|+1-g_i(P))/|C|$. Expanding this formula, we obtain:
    \begin{align*}
        \sum_{i \in [n]} & (|C|+1-g_i(P))/|C| = \frac{1}{|C|}\Bigl(\sum_{i \in [n]} (|C|+1) - \sum_{i \in [n]} g_i(P)\Bigr) \\
                         & = \frac{1}{|C|}(n(|C|+1) - ((p_2-p_1)+\cdots+(p_{n}-p_{n-1})+(|C|+p_1-p_n)))                     \\ &= \frac{1}{|C|}(n(|C|+1)- |C|).
    \end{align*}
    Hence $|C'| = (1/|C|)(n(|C|+1)- |C|)$.
    Since $|C|\le n$, the value of $|C'|$ is at least $n$.
    Thus $P'$ can be spread on $C'$.

    Lastly, we construct $Q'$ using the given solution $Q$ for $P$.
    For simplicity, we call a chosen shortest arc from $p_i$ to $q_i$ the \emph{movement arc} of point $i$ and say that a movement arc \emph{crosses} a point if the point is in the interior of the arc.
    To accomplish this, we need a solution $Q$ satisfying $d(p_i,q_i)=|q_i-p_i|$ for every $i\in [n]$.
    This equality may fail when the movement arc of point $i$ crosses the origin, in which case $d(p_i,q_i)$ equals $|C|-|q_i-p_i|$.
    We first show that in an optimal solution, there is always a point not crossed by any movement arc.
    \begin{claim}\label{claim:not_crossing_origin}
        There is an optimal solution $Q$ and a point $x \in C$ such that, for every $i\in[n]$, the movement arc of point $i$ does not cross $x$.
    \end{claim}
    \begin{claimproof}
        Let $Q$ be an optimal solution with minimum total movement.
        By the order preservation property, such an optimal solution always exists.

        First, we show that no point is crossed by both a clockwise and a counterclockwise movement arc.
        Suppose the contrary: there is a point $x$ crossed by both a clockwise and a counterclockwise movement arc.
        Let point $i$ move clockwise from $p_i$ to $q_i$ and point $j$ move counterclockwise from $p_j$ to $q_j$ (see~\Cref{fig:non_crossing_movemet} (left)).
        We unwrap $C$ at $x+|C|/2$ and assume going from left to right is going clockwise in $C$.
        Since the movement arcs have length at most $|C|/2$, both movement arcs crossing $x$ lie entirely in the unwrapped interval.
        If a movement distance equals $|C|/2$, we choose as the movement arc the shortest arc that crosses $x$.
        Now, let $a$ be the distance from $p_i$ to $x$, and $b$ the distance from $x$ to $q_i$.
        Similarly, let $c$ be the distance from $p_j$ to $x$, and $e$ the distance from $x$ to $q_j$.
        We choose $x$ as the origin and observe that the points move from $-a$ to $b$ and from $c$ to $-e$, respectively.
        \Cref{fig:non_crossing_movemet} (top right) illustrates this.
        If we swap these values (that is, we swap $q_i$ and $q_j$), the movement distances become $|a-e|$ and $|c-b|$ (\Cref{fig:non_crossing_movemet} (bottom right)).
        Moreover, swapping $q_i$ and $q_j$ leaves the set of final positions unchanged and hence preserves feasibility.
        On the other hand, $\max\{|a-e|,|c-b|\} $ is at most $ \max\{a+b,c+e\}$, while $|a-e|+|c-b|$ is less than $(a+b)+(c+e)$, contradicting the optimality of $Q$ for the min-sum objective and the choice of $Q$ for the min-max objective.

        \begin{figure}[bt]
            \centering
            \includegraphics[scale=1,page=5]{conn_spread_figures.pdf}
            \caption{No point is crossed by both a clockwise and a counterclockwise movement arc. (Left) A point $x$ is crossed by both the clockwise movement arc of point $i$ from $p_i$ to $q_i$ and the counterclockwise movement arc of point $j$ from $p_j$ to $q_j$.
                (Right) Top: Unwrapping $C$ at $x+|C|/2$ gives movements from $-a$ to $b$ and from $c$ to $-e$.
                Bottom: Swapping $q_i$ and $q_j$ gives movement distances $|a-e|$ and $|c-b|$.
            }
            \label{fig:non_crossing_movemet}
        \end{figure}

        We now consider whether every point of $C$ is crossed by a movement arc.
        If not, there is at least one point not crossed by any movement arc and we are done.
        Then, suppose that every point of $C$ is crossed by a movement arc.
        Since no point is crossed by both a clockwise and a counterclockwise movement arc, either every point moves clockwise or every point moves counterclockwise.
        Without loss of generality, we thus may assume that every point moves clockwise.

        For every $i\in [n]$, we note that $q_{i-1}$ ($q_n$ when $i=1$) must be contained in the movement arc of point $i$; see \Cref{fig:not_crossing_origin}~(left)
        Otherwise, some point on the arc from $q_{i-1}$ to $p_i$ is not crossed by any movement arc.
        We then consider the solution $S = (q_n,q_1,\ldots,q_{n-1})$ where point $i$ is moved to $q_{i-1}$ instead (\Cref{fig:not_crossing_origin}~(middle)).
        This reassignment preserves the order of $P$ and does not change the components of $Q$, thus $S$ is also a feasible solution.
        Moreover, the distances satisfy $d(p_i,q_{i-1})\le d(p_i,q_i)$ for every $i\in [n]$ and strict inequality for at least one $i$.
        Hence $\costone(P,S) < \costone(P,Q)$ holds, contradicting the optimality of $Q$ for the min-sum objective.
        For the min-max objective, the reassignment implies $\costinf(P,S) \le \costinf(P,Q)$.
        If the inequality is strict, then the optimality of $Q$ is contradicted.
        Otherwise, $S$ is also optimal.
        If some point of $C$ is not crossed by any movement arc of $S$, then we obtained a desired optimal solution (\Cref{fig:not_crossing_origin}~(right)).
        Otherwise, we repeat the reassignment.
        Since each reassignment strictly decreases the total movement and there are only $n-1$ reassignments (if we do the $n$th reassignment we obtain $Q$ with less total movement, a contradiction), we eventually obtain an $S$ where a point is not crossed by any movement arc.
        Therefore there is a point $x\in C$ not crossed by any movement arc.
    \end{claimproof}

    \begin{figure}[bt]
        \centering
        \includegraphics[scale=1,page=1]{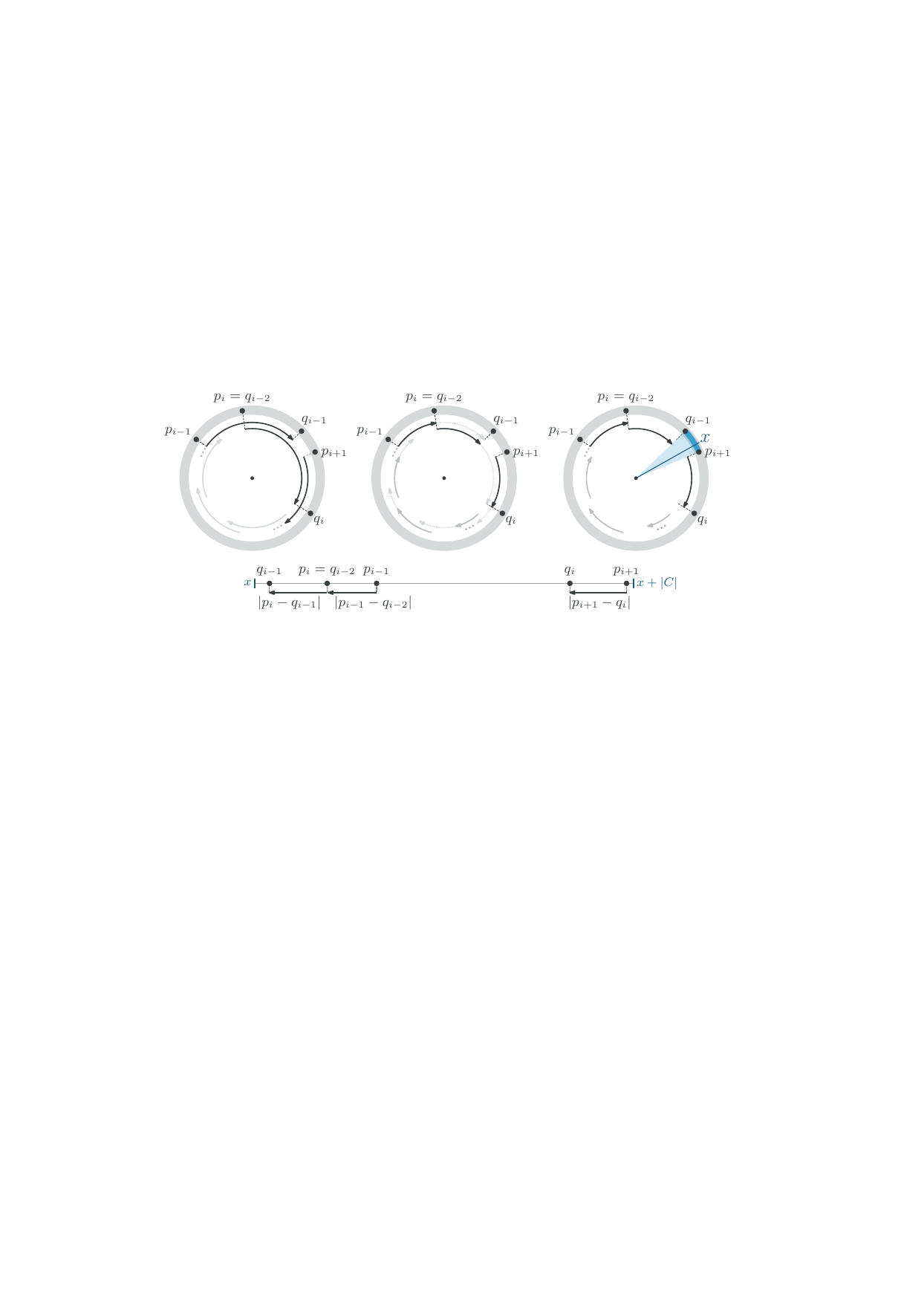}
        \caption{Illustration of \Cref{claim:not_crossing_origin}. (Top)
            Left: Every point of $C$ is crossed by a clockwise movement arc and $q_{i-1}$ lies on the movement arc of point $i$.
            Middle: Point $i$ is moved to $q_{i-1}$.
            Right: After reassignment, there is a point $x$ not crossed by any movement arc.
            (Bottom) The closed cycle $C$ is unwrapped at $x$, showing that all distances have the form $|p-q|$.
        }
        \label{fig:not_crossing_origin}
    \end{figure}

    By \Cref{claim:not_crossing_origin}, after unwrapping $C$ at $x$ (and, if necessary, shifting the indices), we may choose coordinates such that $p_{n+1}=p_1+|C|$, $q_{n+1}=q_1+|C|$, and $d(p_i,q_i)=|q_i-p_i|$ for every $i\in[n]$; see \Cref{fig:not_crossing_origin} (bottom).
    Translating all positions, we may still assume that $p_1 = 0$.
    Moreover, for every $i\in [n]$, we write $q_i = p_i + d_i$ for a value $d_i$.
    Similarly, we write $q'_i = p'_i + d'_i$.
    Thus, the point $p'_i$ equals $q'_i - d'_i = ((i-1)(|C|+1) - p_i)/|C| $, which in turn equals $((i-1)(|C|+1) - q_i + d_i)/|C|$.
    Using this formula, we conveniently set $q'_i$ to $((i-1)(|C|+1) - q_i)/|C|$ and $d'_i$ to $- d_i/|C|$.
    The obtained values of $Q'$ are feasible for the following reason.
    For every $i\in [n]$, the gap $g_i(Q')$ equals $q'_{i+1} - q'_i = (|C|+1 -q_{i+1}+q_i)/|C|$.
    Since $Q$ is feasible, we have $g_i(Q) = q_{i+1}-q_i \le 1$ for every $i \in [n]$.
    Hence $-q_{i+1}+q_{i} \ge -1$ holds, implying that $g_i(Q')\ge 1$.

    We showed how to construct an instance $(P',1)$ of {\spreading} and a feasible solution $Q'$ given an instance $(P,1)$ of {\connectivity} with solution $Q$.
    Note that the same construction applies in the opposite direction.
    If $(P,1)$ is an instance of {\spreading}, then $g_i(Q)\ge 1$ if and only if $g_i(Q')\le 1$. Moreover, $|C|\ge n$ implies $|C'|\le n$.
    Thus, the constructed instance $(P',1)$ is an instance of {\connectivity}.

    We only need to show that $Q$ is optimal if and only if the corresponding $Q'$ is optimal.
    Recall that the distance $d(p'_i, q'_i) $ equals $\min\{|d'_i|, |C'| - |d'_i|\} \le |C'|/2$.
    Since $|d_i|\le |C|/2$, we have that $|d'_i| \le 1/2$.
    Moreover, we have $|C'|=n-1+n/|C|\ge 1$ implied by $n\ge 2$.
    Hence $|d'_i|\le 1/2\le |C'|/2$, so the distance $d(p'_i,q'_i)$ equals $|d'_i|$.
    This gives us the following cost for the min-sum objective,
    \begin{equation*}
        \costone(P,Q) = \sum_{i\in [n]}|d_i| = \sum_{i\in [n]}|(-|C|d'_i)| = |C|\sum_{i\in [n]}|d'_i| = |C|\costone(P',Q').
    \end{equation*}
    For the min-max objective,
    \begin{equation*}
        \costinf(P,Q) = \max_{i\in [n]} |d_i| = \max_{i\in [n]} |(-|C|d'_i)| = |C|\max_{i\in [n]} |d'_i| = |C|\costinf(P',Q').
    \end{equation*}
    The above equalities show that the costs of the corresponding solutions (for either objective) differ by a factor of $|C|$.
    Assume that $Q$ is optimal.
    If $Q'$ is not optimal, let $Q'' = (q''_i)_{i\in [n]}$ be an optimal solution for $P'$.
    Using \Cref{claim:not_crossing_origin} and after unwrapping $C'$, we may assume $q''_i=p'_i+d''_i$ with $d(p'_i,q''_i)=|d''_i|$ for every $i\in[n]$, where $p'_{n+1}$ equals $p'_1+|C'|$, $q''_{n+1}$ equals $q''_1+|C'|$, and $d''_{n+1}$ equals $d''_1$.
    Using the equality $d'_i=-d_i/|C|$ for the opposite transformation, we consider moving point $i$ from $p_i$ by $-|C|d''_i$.
    For every $i\in [n]$, the difference between the corresponding consecutive positions satisfies $(p_{i+1}-|C|d''_{i+1})-(p_i-|C|d''_i) = g_i(P)-|C|(d''_{i+1}-d''_i)$ which in turn equals $|C|+1-|C|g_i(Q'')$ since $d''_{i+1}-d''_i = g_i(Q'')-g_i(P')$.
    If $P$ is an instance of {\connectivity}, then $Q''$ is a solution for {\spreading}, so $g_i(Q'')\ge 1$ and $|C|+1-|C|g_i(Q'')$ is at most $1$.
    Also, since $p_{n+1}=p_1+|C|$ and $d''_{n+1}=d''_1$, the total of these differences is $|C|$ and hence any empty gap greater than $1$ found by sweeping from $p_1-|C|d''_1$ to $p_1-|C|d''_1+|C|$ would contradict this bound.
    Analogously, if $P$ is a instance of {\spreading}, then $Q''$ is a solution for {\connectivity}, so $g_i(Q'')\le 1$ and $|C|+1-|C|g_i(Q'')$ is at least $1$.
    Hence the constructed solution is feasible for $P$ in either direction.
    Moreover, the distance for every $i\in [n]$ satisfies $d(p_i,p_i-|C|d''_i) \le |C|\,|d''_i| = |C|\,d(p'_i,q''_i)$.
    In other words, the obtained solution has cost at most $|C|$ times the cost of $Q''$, for either objective.
    Since the cost of $Q''$ is smaller than the cost of $Q'$, this gives a solution for $P$ with smaller cost than $Q$, contradicting optimality.
    In the other direction, if $Q'$ is optimal but $Q$ is not, applying the construction of $Q'$ to an optimal solution for $P$ gives a solution for $P'$ with cost smaller than $Q'$, again contradicting optimality.
    Therefore for either objective, $Q$ is optimal if and only if the corresponding $Q'$ is optimal.
\end{proof}

The above theorem shows that the optimal costs of the corresponding {\connectivity} and {\spreading} instances differ by a factor of $|C|/r$, for both objectives. The authors of~\cite{HonoratoDroguett2026} proved that the cyclic version of min-sum {\spreading} on $n$ points can be solved in $O(n\log n)$ time.
Their algorithm removes overlaps of open unit circular arcs while minimising the total movement, and the reduction centres an arc at each point.
Two arcs are disjoint if and only if their centres are at distance at least $1$, and moving an arc is done by moving its centre.
Our transformation takes linear time, so \Cref{thm:conn-edg-cyclic} implies the following:

\begin{corollary}
    The cyclic version of min-sum {\connectivity} is solvable in $O(n\log n)$ time.
\end{corollary}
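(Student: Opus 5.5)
The plan is to compose the linear-time transformation of \Cref{thm:conn-edg-cyclic} with the $O(n\log n)$-time algorithm of~\cite{HonoratoDroguett2026} for cyclic min-sum {\spreading}.

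First, given an instance $(P,r)$ of min-sum {\connectivity} on $C$ with $|C|\le nr$, rescale by $1/r$ so that $r=1$. Then sort the points if necessary; the input is assumed to be given in index order. Rotate so that $p_1=0$, and compute $p'_i=((i-1)(|C|+1)-p_i)/|C|$ and $|C'|=(n(|C|+1)-|C|)/|C|$. This costs $O(n)$ time, and since $|C|\le n$ we get $|C'|\ge n$, so $(P',1)$ is a feasible {\spreading} instance. Next, run the algorithm of~\cite{HonoratoDroguett2026} on $(P',1)$. Its overlap-removal formulation with unit open arcs centred at the points is equivalent to {\spreading}, as the paper remarks. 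This yields an optimal $Q''$ in $O(n\log n)$ time.

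The remaining step is to map $Q''$ back to a solution for $P$, which is where care is needed. Following the proof of \Cref{thm:conn-edg-cyclic}, first find a point $x\in C'$ not crossed by any movement arc of $Q''$, as guaranteed by \Cref{claim:not_crossing_origin}. Algorithmically, we cannot assume the returned $Q''$ already has this property. So I would compute the signed displacements $d''_i$ along shortest arcs, which makes $|d''_i|\le 1/2$ since $|C'|\ge n\ge 2$. I would then set $q_i=p_i-|C|d''_i$ taken modulo $|C|$.

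The proof of the theorem shows that consecutive differences equal $|C|+1-|C|g_i(Q'')\le 1$, using only that the indexing is consistent around the cycle with $d''_{n+1}=d''_1$. It also shows the cost is at most $|C|\costone(P',Q'')=|C|\OPTspread(P')=\OPTconn(P)$. This requires that $Q''$ preserves the cyclic order with the correct index alignment. If the algorithm returns a cyclically shifted assignment, I would first apply the reassignment of \Cref{claim:not_crossing_origin}, which shifts indices in $O(n)$ time, or simply relabel. Finally, multiply by $r$. The main obstacle is this index-alignment bookkeeping; all other steps are $O(n)$, so the total running time is $O(n\log n)$.
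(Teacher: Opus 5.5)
Your route is the paper's: build $(P',r)$ with the linear-time construction of \Cref{thm:conn-edg-cyclic}, solve it with the $O(n\log n)$ algorithm of~\cite{HonoratoDroguett2026}, and use $\OPTconn(P)=(|C|/r)\OPTspread(P')$. The paper's proof is exactly this, and nothing more is needed for the optimal value. The back-mapping details you add, however, rest on two incorrect justifications.

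First, $|d''_i|\le 1/2$ does not follow from $|C'|\ge n\ge 2$. Shortest arcs on $C'$ only give $|d''_i|\le |C'|/2$, which is at least $1$. The bound $1/2$ in the paper concerns $d'_i=-d_i/|C|$, obtained from a {\connectivity} solution. You do not need it, since $d(p_i,q_i)\le |C|\,|d''_i|$ already gives the cost bound. (The bound does hold for an optimal $Q''$, but only a posteriori: a larger displacement would make the back-mapped solution strictly cheaper than $\OPTconn(P)$.)

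Second, the relation $d''_{i+1}-d''_i=g_i(Q'')-g_i(P')$ behind ``consecutive differences equal $|C|+1-|C|g_i(Q'')$'' does not follow from consistent indexing. For shortest-arc displacements it holds a priori only modulo $|C'|$. An error of $|C'|$ there becomes an error of $|C|\,|C'|=(n-1)|C|+n$ in the mapped step, which is in general not a multiple of $|C|$, so reducing modulo $|C|$ does not repair it. The theorem's proof gets the exact relation by unwrapping $C'$ at a point $x$ crossed by no movement arc (\Cref{claim:not_crossing_origin}). For such an $x$, the unwrapped displacements coincide with your shortest-arc ones. For the min-sum objective, the proof of that claim shows that every order-preserving optimal $Q''$ has such an $x$, since otherwise the swap or the shift reassignment would strictly lower the total movement.

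So your procedure works, but what justifies it is this claim, not the indexing. Your worry that the returned $Q''$ may lack the property is unfounded, and the reassignment fallback is unnecessary.
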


\section{Transformation for the Linear Version}
\label{sec:linear_case}

We now show how to transform an instance of {\connectivity} on the line into an instance of {\spreading}, and vice versa.
We use the same transformation idea as in \Cref{thm:conn-edg-cyclic}.
Unlike the cyclic version, the gaps between consecutive points can have arbitrary length in the linear version.
Thus, applying the transformation to a solution of {\spreading} with a sufficiently large gap may reverse the order of the corresponding points.
However, we show that the gaps of an optimal solution are bounded by the largest initial gap.
In particular, every gap is at most the maximum of $r$ and the largest gap of the initial instance.
This lets us apply an analogous transformation for the linear version.
Let $(P,r)$ be an instance on the line, where $P=(p_i)_{i\in[n]}$, and let $M=1+\max\{1,\max_{i\in[n-1]}g_i(P)/r\}$.
We first show that an optimal solution of {\spreading} can be chosen with gaps between $r$ and $(M-1)r$.

\begin{figure}[bt]
    \centering
    \includegraphics[scale=1,page=3]{conn_spread_figures.pdf}
    \caption{Transformation example on the line with $r=8$ and $M=3$.
        The {\connectivity} instance is $(P,r)$, where $P=(0,2,18,20)$, and has an optimal solution $Q=(0,6,14,20)$ for both objectives.
        The corresponding {\spreading} instance is $(P',r)$, where $P'=(0,11,15,26)$, and its optimal solution is $Q'=(0,9,17,26)$.
        The second gap violates feasibility in both initial instances: $g_2(P)=16>8$ and $g_2(P')=4<8$.
        The movement vectors $(0,4,-4,0)$ and $(0,-2,2,0)$ show that movements reverse direction, while $d(p_i',q_i')=d(p_i,q_i)/(M-1)$.
    }
    \label{fig:reduction_example_line}
\end{figure}

\begin{lemma}\label{lem:bounded_gap_line}
    For either objective, there is an order-preserving optimal solution $Q=(q_i)_{i\in[n]}$ for the {\spreading} instance $(P,r)$ such that $r\le g_i(Q)\le (M-1)r$ for every $i\in[n-1]$.
\end{lemma}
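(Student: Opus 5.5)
Let $G=\max\{r,\max_{i\in[n-1]}g_i(P)\}=(M-1)r$. We start from an order-preserving optimal solution $Q$ for {\spreading}, which exists by the order preservation property. The lower bound $g_i(Q)\ge r$ is just feasibility, so only the upper bound needs work. Among all order-preserving optimal solutions we take one that minimises the potential $\Phi(Q)=\sum_{i\in[n-1]}\max\{0,g_i(Q)-G\}$; such a minimiser exists by compactness, since we may restrict to solutions with $|q_i-p_i|$ bounded by the optimal cost. We then argue by contradiction. Suppose some gap satisfies $g_k(Q)>G$, and split the points into the left block $L=\{1,\dots,k\}$ and the right block $R=\{k+1,\dots,n\}$.

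The key observation concerns the displacements $d_i=q_i-p_i$. We have $g_k(Q)-g_k(P)=d_{k+1}-d_k$, and $g_k(P)\le G<g_k(Q)$, so $d_{k+1}>d_k$. This leaves three cases: $d_k>0$, or $d_{k+1}<0$, or $d_k\le 0\le d_{k+1}$ with at least one inequality strict. We choose the block whose boundary point moved away from the gap (if $d_{k+1}>0$, then $R$; otherwise $L$, where $d_k<0$) and translate part of it back towards the gap by a small $\varepsilon>0$.

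Concretely, take the case $d_{k+1}>0$. Let $j\ge k+1$ be maximal such that $d_{k+1},\dots,d_j>0$, and shift $q_{k+1},\dots,q_j$ left by $\varepsilon$, where $\varepsilon<\min\{g_k(Q)-G,\ \min_{k<i\le j}d_i\}$. This changes only two gaps. Gap $k$ shrinks but stays above $G\ge r$, so it remains feasible. Gap $j$ (if $j<n$) grows, so feasibility is kept, and the interior gaps of the block are unchanged. Every moved point had $d_i>0$ and is moved closer to $p_i$, so every $|d_i|$ weakly decreases and strictly decreases for $i=k+1$. This strictly decreases $\costone$, contradicting optimality in the min-sum case. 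For min-max, $\costinf$ does not increase, so the new solution is still optimal.

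The main obstacle is the min-max case: we must ensure the new solution has a smaller $\Phi$, while gap $j$ may grow above $G$. Gap $j$ increases only if $d_{j+1}\le 0<d_j$, and then $g_j(Q)=g_j(P)+d_{j+1}-d_j<g_j(P)\le G$. So for small enough $\varepsilon$ gap $j$ stays at most $G$ and contributes nothing to $\Phi$, while gap $k$ contributes $\varepsilon$ less. Hence $\Phi$ strictly decreases, contradicting the choice of $Q$. The case $d_k<0$ is symmetric: shift the maximal block $i\le k$ with $d_i<0$ to the right. We must also confirm that order preservation is kept, which holds because the changed gaps stay positive. Therefore no gap exceeds $G=(M-1)r$.
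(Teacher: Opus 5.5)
Your proof is correct and is essentially the paper's local-improvement argument: an oversized gap forces $d_{k+1}>d_k$, so an endpoint of that gap can be pulled back toward its initial position, strictly lowering $\costone$ without raising $\costinf$. The paper does this by shifting one endpoint by the full excess, or both endpoints when $d_k\le 0\le d_{k+1}$, rather than shifting a block by $\varepsilon$. The only real difference is the min-max tie-breaker: the paper picks an optimal solution of minimum total movement, which your move already contradicts because it strictly decreases $\costone$. So your potential $\Phi$ and the maximal-block bookkeeping that keeps gap $j$ below $G$ are valid but unnecessary.
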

\begin{proof}
    Again, it suffices to prove the statement for $r=1$.
    Let $Q$ be an optimal solution preserving the order of $P$.
    Suppose that $g_i(Q)>M-1$ for some $i\in[n-1]$, and let $\lambda=g_i(Q)-(M-1)$.
    We divide the proof into three cases.

    First, suppose that $p_i<q_i$ (\Cref{fig:bounded_gap_line}, left).
    Since $g_i(P)\le M-1$, we have $(q_{i+1}-p_{i+1})-\lambda \ge q_i-p_i>0$.
    We obtain a solution by moving point $i+1$ to the left by $\lambda$.
    In this solution, the $i$th gap becomes $M-1$ and other gaps remain feasible.
    Moreover, $q_{i+1}$ remains to the right of $p_{i+1}$, so the distance $q_{i+1}-p_{i+1}$ strictly decreases.
    The case $q_{i+1}<p_{i+1}$ (\Cref{fig:bounded_gap_line}, middle) is analogous by moving point $i$ to the right by $\lambda$.

    It remains to consider the case $q_i\le p_i$ and $p_{i+1}\le q_{i+1}$, see \Cref{fig:bounded_gap_line} (right).
    Since $p_{i+1}-p_i\le M-1$ and $q_{i+1}-q_i> M-1$, there is an $x$ such that $[p_i,p_{i+1}]\subseteq[x,x+M -1]\subseteq[q_i,q_{i+1}]$.
    Thus, we consider moving points $i$ and $i+1$ to $x$ and $x+M-1$, respectively.
    Again, the $i$th gap becomes $M-1$ and other gaps remain feasible.
    Moreover, we move points $i$ and $i+1$ closer to their initial positions, hence at least one distance strictly decreases.

    In all three cases, the obtained solution is feasible and has smaller total movement.
    For the min-sum objective, this contradicts the optimality of $Q$.
    For the min-max objective, we assume that $Q$, among all order-preserving optimal solutions, is a solution with minimum total movement.
    In the above construction, no movement increases, so the obtained solution is also optimal for the min-max objective but contradicts the choice of $Q$ since we obtained smaller total movement.
    Therefore, $g_i(Q)\le M-1$ for every $i\in[n-1]$.
\end{proof}

\begin{figure}[bt]
    \centering
    \includegraphics[scale=1,page=4]{conn_spread_figures.pdf}
    \caption{ The three cases of~\Cref{lem:bounded_gap_line} for $r=1$. In all cases, one can obtain a gap bounded by $M-1$ and smaller movement by shifting a point towards its initial position.
    }
    \label{fig:bounded_gap_line}
\end{figure}

In the cyclic version, the length $|C|$ of the closed cycle gives an upper bound for the gaps and we exploit this fact to scale the values in the transformation.
On the line, no such bound exists in general.
However, with \Cref{lem:bounded_gap_line} we showed that there is an optimal solution where each gap is bounded by $(M-1)r$.
After scaling by $1/r$, we can use $M-1$ for the linear version in the same manner as $|C|/r$ in the cyclic version.
This gives the following result.

\begin{theorem}
    \label{thm:conn-edg-line}
    For $r>0$, one can transform an instance $(P,r)$ of the linear version of {\connectivity} into an instance $(P',r)$ of {\spreading} such that\ $\OPTconn(P)=(M-1)\OPTspread(P')$, and vice versa.
\end{theorem}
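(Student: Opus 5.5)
Throughout, take $r=1$ by the scaling used in \Cref{thm:conn-edg-cyclic}, and write $K=M-1\ge 1$, where $M$ is computed from the given instance $P$. The plan mirrors \Cref{thm:conn-edg-cyclic}, with $|C|$ replaced by $K$.

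\textbf{Construction.} Define $P'$ through its gaps by
\[
g_i(P')=\frac{M-g_i(P)}{K}\qquad\text{for } i\in[n-1],
\]
and fix $p'_1=0$. Solving the recurrence as in the cyclic case gives
\[
p'_i=\frac{(i-1)M-p_i}{K}\qquad (\text{with } p_1=0).
\]
Since $g_i(P)\le K<M$, every $g_i(P')$ is positive, so $P'$ is a valid ordered instance. The same formula is used in both directions, and it matches \Cref{fig:reduction_example_line}. For a solution $Q$ of $P$, write $q_i=p_i+d_i$ and set $q'_i=((i-1)M-q_i)/K$, so that $d'_i=-d_i/K$. Both costs then scale exactly by $K$; on the line there is no wrap-around, so the issue handled by \Cref{claim:not_crossing_origin} does not arise. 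It remains to check that this map and its inverse send order-preserving feasible solutions to order-preserving feasible solutions, because then optimal costs correspond. This gives $\OPTconn(P)=K\,\OPTspread(P')$, and the same for the reverse direction.

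\textbf{Gap checks.} Under the map, a solution gap $h$ of one instance becomes the gap $(M-h)/K$ of the other. By the inverse map, a gap $h'$ becomes $M-Kh'$.
\begin{itemize}
\item \emph{Connectivity to spreading, forward.} Take an order-preserving optimum of $P$. Its gaps lie in $[0,1]$, so the image gaps lie in $[1,M/K]$. These are spreading-feasible and order-preserving.
\item \emph{Connectivity to spreading, backward.} A spreading solution $Q''$ of $P'$ has gaps $h\ge1$, so $M-Kh\le 1$, which is the connectivity constraint. We also need $M-Kh\ge0$, i.e.\ $h\le M/K$, to preserve order. This is where \Cref{lem:bounded_gap_line} enters. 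Applied to $P'$ with its own parameter $M'$, it gives an optimal $Q''$ with all gaps at most $M'-1=\max\{1,\max_i g_i(P')\}$. Since $g_i(P)\ge0$, we have $g_i(P')\le M/K$, and $M/K\ge1$. Hence $M'-1\le M/K$, as needed.
\item \emph{Spreading to connectivity, backward.} A connectivity solution of $P'$ has gaps in $[0,1]$. Their images $M-Kh'$ lie in $[1,M]$, so they are spreading-feasible and ordered.
\item \emph{Spreading to connectivity, forward.} Apply \Cref{lem:bounded_gap_line} directly to $P$ to get an optimal spreading solution with gaps in $[1,K]$. Their images $(M-h)/K$ lie in $[0,1]$, so they are connectivity-feasible and order-preserving.
\end{itemize}

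\textbf{Concluding the correspondence.} In each direction we now have a cost-scaling bijection between a class of feasible solutions that contains an optimum and feasible solutions of the other instance. Arguing by contradiction exactly as at the end of \Cref{thm:conn-edg-cyclic}, a cheaper solution on one side would map to a cheaper solution on the other. This yields equality of the optimal costs up to the factor $K=M-1$.

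\textbf{Main obstacle.} The delicate step is the backward map from spreading to connectivity in the connectivity-to-spreading direction. There, order preservation needs the upper gap bound on the spreading optimum, and one must confirm that the parameter $M'$ of the constructed instance satisfies $M'-1\le M/K$. Note that $M'\ne M$ in general, so the theorem's factor is tied to the $M$ of the original instance.
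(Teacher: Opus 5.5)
Your proposal is correct and follows the paper's proof essentially step for step. You use the same affine map $p'_i=((i-1)M-p_i)/(M-1)$ and derive the same two inequalities, one from mapping an order-preserving optimum forward and one from mapping a gap-bounded optimum of the other instance back, with \Cref{lem:bounded_gap_line} supplying the upper gap bound in both directions. Your explicit check that the parameter $M'$ of $P'$ satisfies $M'-1\le M/(M-1)$ usefully fills in a step the paper asserts without justification.
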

\begin{proof}
    We prove the statement for $r= 1$.
    Let $(P,1)$ be an instance of {\connectivity} and let $Q=(q_i)_{i\in[n]}$ be an optimal solution preserving the order of $P$.
    We use the same construction as the one for \Cref{thm:conn-edg-cyclic} replacing $|C|+1$ by $M$.
    For every $i \in [n]$, the positions $p'_i$ in the constructed instance $(P',1)$ and $q'_i$ in its solution $Q' = (q'_i)_{i\in [n]}$ for {\spreading} are defined by:
    \begin{equation*}
        p_i'=\frac{(i-1)M-p_i}{M-1}, \qquad q_i'=\frac{(i-1)M-q_i}{M-1}.
    \end{equation*}
    Similarly, the gaps are defined as:
    \begin{equation*}
        g_i(P')=\frac{M-g_i(P)}{M-1},\qquad g_i(Q')=\frac{M-g_i(Q)}{M-1},
    \end{equation*}
    for every $i \in [n-1]$.
    We note that $Q'$ is feasible for the following reason.
    Since $Q$ is feasible, we have that $g_i(Q)\le 1$ and $g_i(Q')\ge 1$.
    Thus $Q'$ is a feasible solution of {\spreading}.
    Moreover, $d(p'_i,q'_i) = d(p_i,q_i)/(M-1)$ for every $i\in [n]$.
    Hence, for either objective, the cost of $Q'$ is $1/(M-1)$ times the cost of $Q$, implying that $\OPTspread(P') \le 1/(M-1)\OPTconn(P)$.

    In the other direction, let $Q'' = (q''_i)_{i\in [n]}$ be an optimal solution of {\spreading} for $P'$. Since the gap $g_i(P')$ satisfies $g_i(P') \le M/(M-1)$, we may assume that $1 \le g_i(Q'') \le M/(M-1)$ by \Cref{lem:bounded_gap_line}.
    Using the inverse of the above formula, we obtain $q_i = (i-1)M - (M-1)q''_i$ and the resulting gaps are $g_i(Q) = M-(M-1)g_i(Q'')$.
    By the bounds given for each $g_i(Q'')$, the resulting gaps are between $0$ and $1$.
    Thus $Q$ is a feasible solution of {\connectivity}.
    Moreover, $d(p_i,q_i)=(M-1)d(p_i',q''_i)$ for every $i \in [n]$, so the cost of $Q$ equals $(M-1)$ times the cost of $Q''$.
    Consequently, we obtain $\OPTconn(P) \le (M-1)\OPTspread(P')$.
    Combining both inequalities, we obtain $\OPTconn(P)=(M-1)\OPTspread(P')$.

    The transformation from an instance $(P,1)$ of {\spreading} to an instance $(P',1)$ of {\connectivity} follows analogously.
    In particular, by \Cref{lem:bounded_gap_line}, we may choose an optimal solution $Q$ of {\spreading} whose gaps are between $1$ and $ M-1$.
    Following the above transformation we conclude that the constructed $Q'$ is feasible for {\connectivity}.
    In the other direction, every feasible solution $Q''$ for $P'$ produces a feasible solution for $P$.
    For every $i\in[n]$, the movement of the corresponding point in the resulting solution equals $(M-1)d(p_i',q''_i)$.
    Hence we obtain $\OPTspread(P)=(M-1)\OPTconn(P')$.
    We proved the statement for $r = 1$.
    For arbitrary $r>0$, the result follows by scaling the coordinates by $1/r$ before applying the construction and scaling them back by $r$.
    This concludes the proof.
\end{proof}

As we mentioned in the introduction, the $O(n\log n)$-time algorithm of Ghadiri and Yazdanbod~\cite{Ghadiri2016} solves the linear version of min-sum {\spreading}. Together with \Cref{thm:conn-edg-line}, we conclude the following.

\begin{corollary}
    The linear version of min-sum {\connectivity} is solvable in $O(n\log n)$ time.
\end{corollary}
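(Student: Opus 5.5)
The plan is to combine \Cref{thm:conn-edg-line} with the algorithm of Ghadiri and Yazdanbod~\cite{Ghadiri2016}, and to check that every step outside their algorithm takes $O(n)$ time. Given an instance $(P,r)$ of min-sum {\connectivity} on the line, we first sort the points if necessary; this costs $O(n\log n)$ time and is within budget. We scale by $1/r$ and compute $M=1+\max\{1,\max_{i}g_i(P)\}$ in $O(n)$ time. We then build $P'$ from $p'_i=((i-1)M-p_i)/(M-1)$, which again takes $O(n)$ time. Because $M\ge 2$, the gaps $g_i(P')=(M-g_i(P))/(M-1)$ are nonnegative, so $P'$ is sorted and forms a valid {\spreading} instance with threshold $1$.

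Next we run the $O(n\log n)$ algorithm of~\cite{Ghadiri2016} on $(P',1)$ and obtain an optimal solution $Q''$. Here lies the one delicate point. To map $Q''$ back we use $q_i=(i-1)M-(M-1)q''_i$, and this only yields a feasible {\connectivity} solution when every gap of $Q''$ lies in $[1,M/(M-1)]$, i.e.\ the gap bound of \Cref{lem:bounded_gap_line}. The algorithm might return an optimal solution with a larger gap, so we post-process it. The first option is to clamp each oversized gap to $M'-1$, where $M'-1=\max\{1,\max_i g_i(P')\}\le M/(M-1)$, following the local moves in the proof of \Cref{lem:bounded_gap_line}. The cleaner option is to avoid the lemma's limit argument altogether. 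Since $Q''$ is optimal for min-sum, the three cases in that proof show that a gap exceeding the bound allows a strict decrease in total movement, which is impossible. Hence every min-sum optimal order-preserving solution already satisfies the bound.

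That argument holds for each gap individually, so no iteration is needed. I still need to confirm that the returned solution is order preserving, and to sort it if it is not. Sorting does not increase the cost, by the uncrossing argument, and takes $O(n\log n)$ time.

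With the gaps bounded, the inverse map produces gaps $g_i(Q)=M-(M-1)g_i(Q'')\in[0,1]$, so $Q$ is feasible. Its cost is $(M-1)$ times the cost of $Q''$, which equals $\OPTconn(P)$ by \Cref{thm:conn-edg-line}, so $Q$ is optimal. Scaling back by $r$ completes the algorithm. The total running time is $O(n)+O(n\log n)=O(n\log n)$.
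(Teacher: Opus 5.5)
Your proposal is correct and follows the paper's route: the corollary is obtained by running the $O(n\log n)$ algorithm of Ghadiri and Yazdanbod~\cite{Ghadiri2016} on the {\spreading} instance built in linear time by \Cref{thm:conn-edg-line}. Your extra step is a valid filling-in of the solution-recovery detail that the paper leaves implicit. In it you observe that the strict-decrease argument in the proof of \Cref{lem:bounded_gap_line} forces every order-preserving min-sum optimum for $(P',1)$ to have all gaps in $[1,M/(M-1)]$, so the returned solution (sorted first if needed) maps back directly via $q_i=(i-1)M-(M-1)q''_i$.
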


\section{Individual Thresholds on a Line}
\label{sec:individual_thresholds}

In this section, we extend the equivalence between {\connectivity} and {\spreading} on the line to the setting where points have {\em individual thresholds}.
That is, instead of using the same threshold $r$ for every pair of points, point $i$ has an individual threshold $r_i>0$.
For points $i$ and $j$, we define their threshold as $r_{i,j}=(r_i+r_j)/2$ and their gap in a solution $Q$ as $g_{i,j}(Q)=d(q_i,q_j)$.
If $r_i=r$ for every $i\in[n]$, then $r_{i,j}=r$, the adapted problems reduce to the uniform versions.
Throughout this section, we refer to the usual problems with a common threshold as the \emph{uniform versions}.

Unlike in the uniform versions, we cannot assume that an optimal solution preserves the order of the initial points.
For min-max {\spreading} with individual thresholds, Li and Wang gave an $O(n\log n)$-time algorithm for this problem in the context of intervals~\cite{Li2019}.
Here, we focus on a restricted version in which the points must preserve their initial order and establish the equivalence with {\connectivity} for both objectives.
This allows us to restrict the constraints to gaps between consecutive points in this order.
\subparagraph{Problem Formulation} Let $P=(p_i)_{i\in[n]}$ denote the initial positions of $n$ {\em labelled} points on the line, indexed so that $p_1\le\cdots\le p_n$.
Given $P$ and $(r_i)_{i\in[n]}$, the adapted version of {\connectivity} asks to find an $n$-tuple $Q=(q_i)_{i\in[n]}$ such that $q_1\le\cdots\le q_n$ and $g_{i,i+1}(Q)\le r_{i,i+1}$ for every $i\in[n-1]$.
The adapted version of {\spreading}, instead, asks to find such a tuple $Q$ satisfying $g_{i,i+1}(Q)\ge r_{i,i+1}$ for every $i\in[n-1]$.
As before, the goal of both problems is to find $Q$ while minimising $\costone(P,Q)$ or $\costinf(P,Q)$.

Again, the transformation uses the same idea as in the uniform linear version.
We apply the construction to $P$ and $Q$, obtaining $P'$ and $Q'$, respectively.
For each $i\in[n-1]$, the construction ensures that $g_{i,i+1}(P)\le r_{i,i+1}$ if and only if $g_{i,i+1}(P')\ge r_{i,i+1}$.
The same correspondence holds between the gaps of $Q$ and $Q'$.
Thus, every feasible solution of {\connectivity} gives a feasible solution of {\spreading}.
As in \Cref{lem:bounded_gap_line}, to establish the equivalence, we also need a common factor that bounds the gaps of an optimal solution with respect to their individual thresholds.

Let $M=1+\max\{1,\max_{i\in[n-1]}g_{i,i+1}(P)/r_{i,i+1}\}$.
The definition of $M$ implies that $g_{i,i+1}(P)\le (M-1)r_{i,i+1}$ for every $i\in[n-1]$.
We can adapt the proof of \Cref{lem:bounded_gap_line} and show that an optimal solution of {\spreading} satisfies this bound.
For every $i\in [n-1]$, whenever an optimal solution $Q$ has $g_{i,i+1}(Q)>(M-1)r_{i,i+1}$, we reduce this gap to $(M-1)r_{i,i+1}$ using the same three cases.
Since $g_{i,i+1}(P)\le (M-1)r_{i,i+1}$, the three cases move point $i$, point $i+1$, or both closer to their respective initial positions while reducing the gap. Hence, no movement increases and at least one strictly decreases.
Consequently, for either objective, there is an optimal solution $Q$ satisfying:
\begin{equation}
    r_{i,i+1}\le g_{i,i+1}(Q)\le (M-1)r_{i,i+1}, \qquad i\in[n-1].
    \label{eq:bounded_gap_line-individual}
\end{equation}

We use this value as in~\Cref{thm:conn-edg-line} and show the following result.

\begin{theorem}
    \label{thm:conn-edg-line-individual}
    One can transform an instance of the adapted linear version of {\connectivity} with initial positions $P$ into an instance of {\spreading} with initial positions $P'$ and the same individual thresholds such that $\OPTconn(P)=(M-1)\OPTspread(P')$, and vice versa.
\end{theorem}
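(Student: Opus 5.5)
We mirror the proof of \Cref{thm:conn-edg-line}, with the constant $1$ replaced by the individual thresholds. Uniform scaling is no longer available, so we build the thresholds into the construction. Set $s_1=0$ and $s_i=\sum_{j<i} r_{j,j+1}$ for $i\ge 2$; this is the cumulative threshold sum and replaces the term $(i-1)$ of the uniform case. For a tuple $X$ define
\begin{equation*}
x'_i=\frac{M s_i - x_i}{M-1},\qquad i\in[n].
\end{equation*}
We set $P'=(p'_i)$, and for any solution $Q$ of one problem we write $Q'=(q'_i)$ for the tuple obtained by the same formula. Then
\begin{equation*}
g_{i,i+1}(X')=\frac{M r_{i,i+1}-g_{i,i+1}(X)}{M-1}
\end{equation*}
for signed consecutive differences, and $|q'_i-p'_i|=|q_i-p_i|/(M-1)$. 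The key identity is
\begin{equation*}
g_{i,i+1}(X')-r_{i,i+1}=\frac{r_{i,i+1}-g_{i,i+1}(X)}{M-1}.
\end{equation*}
Hence $g_{i,i+1}(X)\le r_{i,i+1}$ if and only if $g_{i,i+1}(X')\ge r_{i,i+1}$. Moreover, because $g_{i,i+1}(P)\le (M-1)r_{i,i+1}$, every gap of $P'$ is at least $r_{i,i+1}/(M-1)>0$. So $P'$ is sorted and is a legitimate instance.

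For the direction from {\connectivity} to {\spreading}, take an optimal order-preserving $Q$ for $P$, so $0\le g_{i,i+1}(Q)\le r_{i,i+1}$. Then $g_{i,i+1}(Q')\ge r_{i,i+1}>0$, so $Q'$ is order-preserving and feasible for {\spreading}. Its cost is exactly $1/(M-1)$ times that of $Q$ for both objectives, which gives $\OPTspread(P')\le\OPTconn(P)/(M-1)$.

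For the converse inequality, take an optimal solution $Q''$ for $P'$. We need the bound~\eqref{eq:bounded_gap_line-individual} for the instance $P'$, but with the constant $M_{P'}$ defined from $P'$ rather than from $P$. Since $g_{i,i+1}(P')\le M r_{i,i+1}/(M-1)$, we have $M_{P'}-1\le M/(M-1)$, using $M/(M-1)\ge 1$. Hence we may assume $r_{i,i+1}\le g_{i,i+1}(Q'')\le M r_{i,i+1}/(M-1)$. Inverting the map gives $q_i=M s_i-(M-1)q''_i$, with gaps $M r_{i,i+1}-(M-1)g_{i,i+1}(Q'')\in[0,r_{i,i+1}]$. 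So $Q$ is order-preserving and feasible for {\connectivity}, with cost $(M-1)$ times that of $Q''$, and equality follows.

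The reverse direction, from {\spreading} to {\connectivity}, uses the same map. By~\eqref{eq:bounded_gap_line-individual} applied to $P$, we may choose an optimal $Q$ with gaps in $[r_{i,i+1},(M-1)r_{i,i+1}]$. Its image then has gaps in $[r_{i,i+1}/(M-1),\,r_{i,i+1}]$, so it is sorted and feasible for {\connectivity}. Conversely, any feasible order-preserving $Q''$ for the {\connectivity} instance $P'$ has gaps in $[0,r_{i,i+1}]$. Its preimage has gaps at least $r_{i,i+1}>0$, hence is sorted and feasible for {\spreading}, and no gap bound is needed in this direction. Both maps are linear time.

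The main obstacle is ensuring that the map preserves order, since the adapted problems require $q_1\le\cdots\le q_n$. This is exactly where the adapted version of \Cref{lem:bounded_gap_line} is used: it bounds the gaps of optimal {\spreading} solutions so their images do not cross. One detail needs checking. The constant $M$ used for the instance $P'$ must be recomputed from $P'$, and the lemma's three-case argument must still hold for that instance. Because the lemma is stated with any upper bound on the initial gap ratios, applying it with the bound $M/(M-1)$ suffices.
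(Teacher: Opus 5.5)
Your proposal is correct and follows the paper's proof: the same affine map $x'_i=(M R_i-x_i)/(M-1)$ built from cumulative thresholds, the same exact cost scaling by $M-1$ in both directions, and the same use of the bounded-gap property for $P'$ via $g_{i,i+1}(P')\le M r_{i,i+1}/(M-1)$ and $M/(M-1)\ge 1$. Your explicit check that the constant recomputed from $P'$ satisfies $M_{P'}-1\le M/(M-1)$ spells out a step the paper leaves implicit.
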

\begin{proof}
    We use a very similar argument to that for~\Cref{thm:conn-edg-line}.
    Let the {\connectivity} instance have initial positions $P$.
    We first derive a recurrence for the positions in $P'$. 
    To this end, we establish a correspondence between the gaps of $P$ and $P'$.
    In particular, for every $i\in[n-1]$, we want
    \begin{equation*}
        g_{i,i+1}(P)-r_{i,i+1}\le 0
        \quad\text{if and only if}\quad
        g_{i,i+1}(P')-r_{i,i+1}\ge 0.
    \end{equation*}
    Following the same argument as in the proof of \Cref{thm:conn-edg-cyclic} for the scaling and using equation~\eqref{eq:bounded_gap_line-individual}, we scale $g_{i,i+1}(P)-r_{i,i+1}$ by $1/(M-1)$ and set
    $(g_{i,i+1}(P)-r_{i,i+1})/(M-1)=-(g_{i,i+1}(P')-r_{i,i+1})$, which gives $g_{i,i+1}(P')=(M r_{i,i+1}-g_{i,i+1}(P))/(M-1)$.

    We derive the positions using the obtained gap formula for $P'$.
    For every $i\in[n-1]$, we have that the $i$th gap of $P'$ equals
    $g_{i,i+1}(P')=p'_{i+1}-p'_i=(M r_{i,i+1}-p_{i+1}+p_i)/(M-1)$,
    and obtain the recurrence $p_{i+1}+(M-1)p'_{i+1}=p_i+(M-1)p'_i+M r_{i,i+1}$.
    Expanding this recurrence, we obtain
    \begin{align*}
        p_i+(M-1)p'_i
         & =p_{i-1}+(M-1)p'_{i-1}+M r_{i-1,i}                 \\
         & =p_1+(M-1)p'_1+M(r_{1,2}+r_{2,3}+\cdots+r_{i-1,i}) \\
         & =p_1+(M-1)p'_1+M\sum_{j=1}^{i-1}r_{j,j+1}.
    \end{align*}
    So we choose $p'_1=-p_1/(M-1)$ and obtain $p'_i=(M\sum_{j=1}^{i-1}r_{j,j+1}-p_i)/(M-1)$ for every $i\in[n]$.
    By the definition of $M$, these positions satisfy $g_{i,i+1}(P')\ge r_{i,i+1}/(M-1)>0$, so $P'$ follows the same index order.
    We abbreviate the sum by setting $R_i=\sum_{j=1}^{i-1}r_{j,j+1}$.
    Similarly, given an optimal solution $Q=(q_i)_{i\in[n]}$ for $P$, we apply the same construction and obtain $q'_i=(M R_i-q_i)/(M-1)$ for every $i\in[n]$.
    We note that $Q'$ is feasible by the following reason.
    Since $Q$ is feasible, we have that $g_{i,i+1}(Q)\le r_{i,i+1}$ and $g_{i,i+1}(Q')=(M r_{i,i+1}-g_{i,i+1}(Q))/(M-1)\ge r_{i,i+1}$ for every $i\in[n-1]$.
    Thus $Q'$ is a feasible solution of {\spreading}.

    For optimality, we use the same argument as in the uniform linear version.
    From the formulas for $p'_i$ and $q'_i$, we have
    $q'_i-p'_i=-(q_i-p_i)/(M-1)$, and thus the distance
    $d(p'_i,q'_i) $ equals $d(p_i,q_i)/(M-1)$ for every $i\in[n]$.
    Hence for either objective,
    we have $\OPTspread(P')\le \OPTconn(P)/(M-1)$.
    In the other direction, let $Q''=(q''_i)_{i\in[n]}$ be an optimal solution of {\spreading} for $P'$.
    We show how to transform $Q''$ into a feasible solution $Q$ for $P$.
    The gap formula gives $g_{i,i+1}(P')\le M r_{i,i+1}/(M-1)$ for every $i\in[n-1]$, and since $M/(M-1)\ge 1$, equation~\eqref{eq:bounded_gap_line-individual} applied to $P'$ allows us to choose $Q''$ satisfying:
    \begin{equation*}
        r_{i,i+1}\le g_{i,i+1}(Q'')\le M r_{i,i+1}/(M-1),\ i\in[n-1].
    \end{equation*}
    By the above formula $q'_i=(M R_i-q_i)/(M-1)$, we substitute $q''_i$ for $q'_i$ and obtain $q_i=M R_i-(M-1)q''_i$.
    The resulting gap $g_{i,i+1}(Q)$ equals $M r_{i,i+1}-(M-1)g_{i,i+1}(Q'')$ and the bounds on $Q''$ imply that $0\le g_{i,i+1}(Q)\le r_{i,i+1}$.
    Thus $Q$ is an order-preserving feasible solution of {\connectivity}.
    Moreover, $d(p_i,q_i)=(M-1)d(p'_i,q''_i)$ for every $i\in[n]$.
    Consequently, $\OPTconn(P)\le(M-1)\OPTspread(P')$.
    Combining both inequalities, we obtain $\OPTconn(P)=(M-1)\OPTspread(P')$.

    For the reverse transformation, we use the same argument.
    Now let the {\spreading} instance have initial positions $P$.
    By equation~\eqref{eq:bounded_gap_line-individual}, we may choose an optimal solution $Q$ satisfying $r_{i,i+1}\le g_{i,i+1}(Q)\le (M-1)r_{i,i+1}$ for every $i\in[n-1]$.
    Under the same transformation, the gaps of $Q'$ are nonnegative and at most $r_{i,i+1}$, so $Q'$ is an order-preserving feasible solution of {\connectivity}.
    In the other direction, transforming any feasible {\connectivity} solution $Q''$ for $P'$ into $Q$ by setting $q_i=M R_i-(M-1)q''_i$ yields a feasible solution for $P$.
    The same movement scaling in both directions gives $\OPTspread(P)=(M-1)\OPTconn(P')$.
\end{proof}

\section{Conclusions}
\label{sec:open_problems}

In this paper, we give linear-time reductions between {\connectivity} and {\spreading} on a line and a closed cycle.
As a consequence, the linear and cyclic versions of min-sum {\connectivity} are solvable in $O(n\log n)$ time.
We also extend the reduction to points on a line with individual thresholds when their initial order is preserved.
Several questions, however, remain open:
\begin{itemize}
    \item Does the relation between the two problems extend to points on a {\em closed cycle} with individual thresholds when their initial cyclic order is preserved?
    \item Can the reduction for individual thresholds be extended to different, partially prescribed, or unrestricted final orders?
    \item If we instead require the first and last of every $k+1$ consecutive points to be at distance at most $r$ ($k$-connectivity), how is this problem related to {\spreading}?
    \item Is there any relation between the two problems in higher dimensions?
\end{itemize}

\bibliography{bibliography}

@Article{Dumitrescu2011a,
  author    = {Dumitrescu, Adrian and Jiang, Minghui},
  journal   = {Discrete Appl. Math.},
  title     = {Constrained k-center and movement to independence},
  year      = {2011},
  doi       = {10.1016/j.dam.2011.01.008},
  issn      = {0166-218X},
  number    = {8},
  pages     = {859--865},
  volume    = {159},
  publisher = {Elsevier BV},
}

@Article{Wang2025,
  author    = {Li, Shimin and Wang, Haitao},
  journal   = {Comp. Geom. Topol.},
  title     = {Algorithms for Minimizing the Movements of Spreading Points in Linear Domains},
  year      = {2025},
  number    = {1},
  pages     = {1:1--1:15},
  volume    = {4},
  copyright = {Creative Commons Attribution 4.0 International},
  doi       = {10.57717/CGT.V4I1.21},
}

@InProceedings{Li2019A,
  author    = {Li, Shimin and Yan, Zhongjiang and Zhang, Jingru},
  booktitle = {Proc. 31st Can. Conf. Comp. Geom. ({CCCG})},
  title     = {An Optimal Algorithm for Maintaining Connectivity of Wireless Network on a Line},
  year      = {2019},
  address   = {Edmonton, Alberta, Canada},
  pages     = {78--84},
}

@InProceedings{Li2026,
  author    = {Li, Shimin},
  booktitle = {Proc. 38th Can. Conf. Comp. Geom. ({CCCG})},
  title     = {Algorithms for Connectivity Maintenance and Barrier Coverage on a Closed Cycle},
  year      = {2026},
  address   = {Orillia, Ontario, Canada},
  pages     = {192--201},
  doi       = {10.48550/ARXIV.2608.01307},
}

@InProceedings{Ghadiri2016,
  author    = {Ghadiri, Mehrdad and Yazdanbod, Sina},
  booktitle = {Proc. 28th Can. Conf. Comp. Geom. ({CCCG})},
  title     = {Minimizing the Total Movement for Movement to Independence Problem on a Line},
  year      = {2016},
  address   = {Vancouver, British Columbia, Canada},
  pages     = {15--20},
  doi       = {10.48550/ARXIV.1606.09596},
}

@InProceedings{HonoratoDroguett2026,
  author    = {Honorato-Droguett, Nicol{\'a}s and Kurita, Kazuhiro and Hanaka, Tesshu and Ono, Hirotaka and Wolff, Alexander},
  booktitle = {Proc. 20th Int. Conf. and Workshops on Algorithms and Comp. ({WALCOM} 2026)},
  title     = {Further Results on Rendering Geometric Intersection Graphs Sparse by Dispersion},
  doi       = {10.1007/978-981-95-7127-7_30},
  url       = {https://doi.org/10.48550/arXiv.2509.20903},
  pages     = {451--466},
  publisher = {Springer},
  series    = {{LNCS}},
  volume    = {16444},
  year      = {2026},
}

@Article{Demaine2009,
  author    = {Demaine, Erik D. and Hajiaghayi, MohammadTaghi and Mahini, Hamid and Sayedi-Roshkhar, Amin S. and Oveisgharan, Shayan and Zadimoghaddam, Morteza},
  journal   = {ACM Trans. Algorithms},
  title     = {Minimizing movement},
  year      = {2009},
  doi       = {10.1145/1541885.1541891},
  issn      = {1549-6333},
  number    = {3},
  pages     = {30:1--30:30},
  volume    = {5},
  publisher = {Association for Computing Machinery (ACM)},
}

@InProceedings{Czyzowicz2010,
  author    = {Czyzowicz, Jurek and Kranakis, Evangelos and Krizanc, Danny and Lambadaris, Ioannis and Narayanan, Lata and Opatrny, Jaroslav and Stacho, Ladislav and Urrutia, Jorge and Yazdani, Mohammadreza},
  booktitle = {Proc. 9th Int. Conf. Ad-Hoc, Mobile and Wireless Netw. ({ADHOC-NOW} 2010)},
  title     = {On Minimizing the Sum of Sensor Movements for Barrier Coverage of a Line Segment},
  doi       = {10.1007/978-3-642-14785-2_3},
  pages     = {29--42},
  publisher = {Springer Berlin Heidelberg},
  series    = {{LNCS}},
  volume    = {6288},
  year      = {2010},
}

@Article{Andrews2016,
  author       = {Andrews, Aaron M. and Wang, Haitao},
  title        = {Minimizing the Aggregate Movements for Interval Coverage},
  doi          = {10.1007/s00453-016-0153-8},
  issn         = {1432-0541},
  number       = {1},
  pages        = {47--85},
  volume       = {78},
  year         = {2017},
  journal      = {Algorithmica},  publisher    = {Springer Science and Business Media LLC},
}

@Article{Kapelko2022,
  author    = {Kapelko, Rafał},
  journal   = {Sensors},
  title     = {Analysis of the Threshold for Energy Consumption in Displacement of Random Sensors},
  year      = {2022},
  issn      = {1424-8220},
  number    = {22},
  pages     = {8789},
  volume    = {22},
  doi       = {10.3390/s22228789},
  publisher = {MDPI AG},
}

@Article{Kranakis2016,
  author   = {Kranakis, Evangelos and Shaikhet, Gennady},
  journal  = {J. Appl. Probab.},
  title    = {Sensor Allocation Problems on the Real Line},
  year     = {2016},
  doi      = {10.1017/jpr.2016.33},
  number   = {3},
  pages    = {667--687},
  volume   = {53},
  comment  = {Interference and Coverage Problems},
}

@InProceedings{Bredin2005,
  author     = {Bredin, Jonathan L. and Demaine, Erik D. and Hajiaghayi, MohammadTaghi and Rus, Daniela},
  booktitle  = {Proc. 6th ACM Int. Symp. Mobile Ad Hoc Netw. Comp. ({MobiHoc} 2005)},
  title      = {Deploying sensor networks with guaranteed capacity and fault tolerance},
  year       = {2005},
  pages      = {309--319},
  publisher  = {ACM},
  doi        = {10.1145/1062689.1062729},
}

@Article{Zavlanos2007,
  author    = {Zavlanos, Michael M. and Pappas, George J.},
  journal   = {IEEE Trans. Robot.},
  title     = {Potential Fields for Maintaining Connectivity of Mobile Networks},
  year      = {2007},
  issn      = {1552-3098},
  number    = {4},
  pages     = {812--816},
  volume    = {23},
  doi       = {10.1109/tro.2007.900642},
  publisher = {Institute of Electrical and Electronics Engineers (IEEE)},
}

@Article{Wang2006,
  author    = {Wang, G. and Cao, G. and La Porta, T.F.},
  journal   = {IEEE Trans. Mobile Comp.},
  title     = {Movement-assisted sensor deployment},
  year      = {2006},
  issn      = {1536-1233},
  number    = {6},
  pages     = {640--652},
  volume    = {5},
  doi       = {10.1109/tmc.2006.80},
  publisher = {Institute of Electrical and Electronics Engineers (IEEE)},
}

@Article{Li2019,
  author  = {Li, Shimin and Wang, Haitao},
  journal = {J. Comp. Geom.},
  title   = {Separating overlapped intervals on a line},
  year    = {2019},
  doi     = {10.20382/JOCG.V10I1A11},
  number  = {1},
  pages   = {281--321},
  volume  = {10},
}

@InProceedings{Mehrandish2011,
  author    = {Mehrandish, Mona and Narayanan, Lata and Opatrny, Jaroslav},
  booktitle = {Proc. {IEEE} Wireless Commun. Netw. Conf. ({WCNC} 2011)},
  title     = {Minimizing the number of sensors moved on line barriers},
  year      = {2011},
  doi       = {10.1109/WCNC.2011.5779210},
  pages     = {653--658},
  publisher = {IEEE},
}

@InProceedings{Gaspers2017,
  author    = {Gaspers, Serge and Gudmundsson, Joachim and Mestre, Juli{\'a}n and R{\"u}mmele, Stefan},
  booktitle = {Proc. 28th Int. Symp. Algorithms and Comp. ({ISAAC} 2017)},
  title     = {Barrier Coverage with Non-uniform Lengths to Minimize Aggregate Movements},
  year      = {2017},
  doi       = {10.4230/LIPIcs.ISAAC.2017.37},
  pages     = {37:1--37:13},
  publisher = {Schloss Dagstuhl -- Leibniz-Zentrum f{\"u}r Informatik},
  series    = {{LIPIcs}},
  volume    = {92},
}

@Article{Chen2013,
  author    = {Chen, Danny Z. and Gu, Yan and Li, Jian and Wang, Haitao},
  journal   = {Discrete Comp. Geom.},
  title     = {Algorithms on Minimizing the Maximum Sensor Movement for Barrier Coverage of a Linear Domain},
  year      = {2013},
  doi       = {10.1007/s00454-013-9525-x},
  number    = {2},
  pages     = {374--408},
  volume    = {50},
  publisher = {Springer},
}

@InProceedings{Czyzowicz2009,
  author    = {Czyzowicz, Jurek and Kranakis, Evangelos and Krizanc, Danny and Lambadaris, Ioannis and Narayanan, Lata and Opatrny, Jaroslav and Stacho, Ladislav and Urrutia, Jorge and Yazdani, Mohammadreza},
  booktitle = {Proc. 8th Int. Conf. Ad-Hoc, Mobile and Wireless Netw. ({ADHOC-NOW} 2009)},
  title     = {On Minimizing the Maximum Sensor Movement for Barrier Coverage of a Line Segment},
  year      = {2009},
  doi       = {10.1007/978-3-642-04383-3_15},
  pages     = {194--212},
  publisher = {Springer},
  series    = {{LNCS}},
  volume    = {5793},
}

\end{document}